\theoremstyle{definition}\newtheorem{claim}[thm]{Claim}
\begin{document}

\title{Two-phase algorithms for the parametric shortest path problem}

\author[lab1]{S. Chakraborty}{Sourav Chakraborty}
\address[lab1]{ CWI, Amsterdam, Netherlands}  
\email{sourav.chakraborty@cwi.nl}  
\thanks{Part of the research done when the first author was a postdoc in
Technion.}

\thanks{For the second author research was supported by an ERC-2007-StG grant number 202405 and
by an ISF grant 1011/06.}

\author[lab2]{E. Fischer}{Eldar Fischer}
\address[lab2]{Department of Computer Science, Technion, Haifa 32000, Israel}	
\email{eldar@cs.technion.ac.il}  

\author[lab3]{O. Lachish}{Oded Lachish}
\address[lab3]{Centre for Discrete Mathematics and its Applications, University of
Warwick , Coventry, UK}	
\email{oded@dcs.warwick.ac.uk}  
\thanks{Third author was supported in part by EPSRC award EP/G064679/1 and by the Centre for Discrete Mathematics and its Applications (DIMAP),
EPSRC award EP/D063191/1.}

\author[lab4]{R. Yuster}{Raphael Yuster}
\address[lab4]{Department of Mathematics, University of Haifa, Haifa 31905, Israel}	
\email{raphy@math.haifa.ac.il}  

\keywords{Parametric Algorithms, Shortest path problem}
\subjclass{F.2.3}


\begin{abstract} A {\em parametric weighted graph} is a graph whose
edges are labeled with continuous real functions of a single common
variable.  For any instantiation of the variable, one obtains a
standard edge-weighted graph.  Parametric weighted graph problems are
generalizations of weighted graph problems, and arise in various
natural scenarios.  Parametric weighted graph algorithms consist of
two phases. A {\em preprocessing phase} whose input is a parametric
weighted graph, and whose output is a data structure, the advice, that
is later used by the {\em instantiation phase}, where a specific value
for the variable is given. The instantiation phase outputs the
solution to the (standard) weighted graph problem that arises from the
instantiation.  The goal is to have the running time of the
instantiation phase supersede the running time of any algorithm that
solves the weighted graph problem from scratch, by taking advantage of
the advice.

In this paper we construct several parametric algorithms for the
shortest path problem.  For the case of linear function weights we
present an algorithm for the single source shortest path problem. Its
preprocessing phase runs in $\tilde{O}(V^4)$ time, while its
instantiation phase runs in only $O(E+V \log V)$ time. The fastest
standard algorithm for single source shortest path runs in $O(VE)$
time. For the case of weight functions defined by degree $d$
polynomials, we present an algorithm with quasi-polynomial
preprocessing time $O(V^{(1 + \log f(d))\log V})$ and instantiation
time only $\tilde{O}(V)$.  In fact, for any pair of vertices $u,v$,
the instantiation phase computes the distance from $u$ to $v$ in only
$O(\log^2 V)$ time.  Finally, for linear function weights, we present
a randomized algorithm whose preprocessing time is
$\tilde{O}(V^{3.5})$ and so that for any pair of vertices $u,v$ and
any instantiation variable, the instantiation phase computes, in
$O(1)$ time, a length of a path from $u$ to $v$ that is at most
(additively) $\epsilon$ larger than the length of a shortest path.  In
particular, an all-pairs shortest path solution, up to an additive
constant error, can be computed in $O(V^2)$ time.

\end{abstract}

\maketitle

\section{Introduction}

In networking or telecommunications the search for the minimum-delay
path (that is the shortest path between two points) is always on.
The cost on  each edge, that is the time taken for a signal to
travel between two  adjacent nodes of the network, is often a
function of real time. Hence the shortest path between any two nodes
changes with time. Of course one can run a shortest path algorithm
every time a signal has to be sent, but usually some prior knowledge of
the network graph is given in advance, such as the structure of the
network graph and the cost functions on each edge (with time as a
variable).

How can one benefit from this extra information? One plausible way
is to preprocess the initial information and store the
preprocessed information. Every time the rest of the input is given,
using the preprocessed information, one can solve the optimization
problem  faster than solving the problem from scratch. Even if the
preprocessing step is expensive one would benefit by saving precious
time whenever the optimal solution has to be computed. Also, if the
same preprocessed information is used multiple times then the total
amount of resources used will be less in the long run.

Similar phenomena can be observed in various other combinatorial
optimization problems that arise in practice; that is, a part of the
input does not change with time and is known in advance. However, many
times it is hard to make use of this extra information.

In this paper we consider only those problems where the whole input is
a weighted graph. We assume that the graph structure and some
knowledge of how the weights on the edges are generated are known in
advance. We call this the {\it function-weighted graph} -- it is a graph
whose edges are labeled with continuous real functions. When all the
functions are univariate (and all have the same variable), the graph
is called a {\em parametric weighted graph}. In other words, the graph
is $G=(V,E,W)$ where $W:E \rightarrow \mathcal{F}$ and $\mathcal{F}$ is the
space of all real continuous functions with the variable $x$. If $G$
is a parametric weighted graph, and $r \in \mathbb{R}$ is any real
number, then $G(r)$ is the standard weighted graph where the weight of
an edge $e$ is defined to be $(W(e))(r)$.  We say that $G(r)$ is an
{\em instantiation} of $G$, since the variable $x$ in each function is
instantiated by the value $r$.  Parametric weighted graphs are
therefore, a generic instance of infinitely many instances of weighted
graphs.

The idea is to use the generic instance $G$ to precompute some general
generic information $I(G)$, such that for any given
instantiation $G(r)$, we will be able to
use the precomputed information $I(G)$ in order to speed up the time
to solve the given problem on $G(r)$, faster than just solving the
problem on $G(r)$ from scratch.  Let us make this notion more
precise.

A {\em parametric weighted graph algorithm} (or, for brevity, a {\em
parametric algorithm}) consists of two phases. A {\em preprocessing
phase} whose input is a parametric weighted graph $G$, and whose
output is a data structure (the advice) that is later used by the {\em
instantiation phase}, where a specific value $r$ for the variable is
given. The instantiation phase outputs the solution to the (standard)
weighted graph problem on the weighted graph $G(r)$. Naturally, the
goal is to have the running time of the instantiation phase
significantly smaller than the running time of any algorithm that
solves the weighted graph problem from scratch, by taking advantage of
the advice constructed in the preprocessing phase.  Parametric
algorithms are therefore evaluated by a pair of running times, the
{\em preprocessing time} and the {\em instantiation time}.

In this paper we show that parametric algorithms are beneficial for
one of the most natural combinatorial optimization problems: the {\em
shortest path} problem in directed graphs. Recall that given a
directed real-weighted graph $G$, and two vertices $u,v$ of $G$, the
distance from $u$ to $v$, denoted by $\delta(u,v)$, is the length of a
shortest path from $u$ to $v$. The {\em single pair} shortest path
problem seeks to compute $\delta(u,v)$ and construct a shortest path
from $u$ to $v$. Likewise, the {\em single source} shortest path
problem seeks to compute the  distances and shortest paths from a
given vertex to all other vertices, and the {\em all pairs} version
seeks to compute distances and shortest paths between all ordered
pairs of vertices.  In some of our algorithms we forgo the calculation
of the path itself to achieve a shorter instantiation time. In all
those cases the algorithms can be easily modified to also output a
shortest path, in which case their instantiation time is the sum of
the time it takes to calculate the distance and a time linear in the
size of the path to be output.

Our first algorithm is a parametric algorithm for single source
shortest path, in the case where the weights are {\em linear}
functions. That is, each edge $e$ is labeled with a function
$a_ex+b_e$ where $a_e$ and $b_e$ are reals. Such linear
parametrization has practical importance.  Indeed, in many problems
the cost of an edge is composed from some constant term plus a term
which is a factor of some commodity, whose cost varies (e.g. bank
commissions, taxi fares, vehicle maintenance costs, and so on).  Our
parametric algorithm has preprocessing time $\tilde{O}(n^4)$ and
instantiation time $O(m+n\log n)$ (throughout this paper $n$ and $m$
denote the number of vertices and edges of a graph, respectively). We
note that the fastest algorithm for the single source shortest path in
real weighted directed graphs requires $O(nm)$ time; the Bellman-Ford
algorithm \cite{Be-1958}.  The idea of our preprocessing stage is to
precompute some other linear functions, on the {\em vertices}, so that
for every instantiation $r$, one can quickly determine whether $G(r)$
has a negative cycle and otherwise use these functions to quickly
produce a reweighing of the graph so as to obtain only nonnegative
weights similar to the weights obtained by Johnson's algorithm
\cite{Jo-1977}. In other words, we {\em avoid} the need to run the
Bellman-Ford algorithm in the instantiation phase. The
$\tilde{O}(n^4)$ time in the preprocessing phase comes from the use of Megiddo's\cite{M-79} technique that we need in order to compute the linear vertex functions.

\begin{theorem}\label{thm:neg} There exists a parametric algorithm for
single source shortest path in graphs weighted by linear functions,
whose preprocessing time is $\tilde{O}(n^4)$ and whose instantiation
time is $O(m + n\log n)$.
\end{theorem}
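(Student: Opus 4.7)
The plan is to execute Johnson's reweighting scheme parametrically. Recall that Johnson's algorithm first attaches an auxiliary source $s$ joined to every vertex by a zero-weight edge, computes vertex potentials $h(v)=\delta(s,v)$ via Bellman-Ford, reweights each edge $(u,v)$ by $w'(u,v)=w(u,v)+h(u)-h(v)\ge 0$, and then runs a single Dijkstra execution in $O(m+n\log n)$ time. Since Dijkstra with nonnegative weights already meets the claimed instantiation bound, the only remaining task is to make the potentials available, at any query value $r$, in time $O(n)$ rather than by re-running Bellman-Ford from scratch.

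The key structural observation is that for linear edge weights the function $h_v(x)=\delta_{G(x)}(s,v)$ is piecewise linear on the set of $x$ for which no negative cycle is reachable from $s$: every iterate of Bellman-Ford is a minimum of sums of linear functions, and the exact distances are obtained within $n-1$ iterates. Consequently the real axis decomposes into an ``infeasible'' tail on each side (where a reachable negative cycle exists) and, in between, a sequence of intervals on each of which every $h_v$ is a single linear function corresponding to a fixed shortest-path tree from $s$.

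Preprocessing would compute this decomposition and the associated linear pieces $\{h_v(x)\}_{v\in V}$ on each interval, in $\tilde{O}(n^4)$ time, by running Bellman-Ford parametrically with Megiddo's technique~\cite{M-79}. Every Bellman-Ford comparison is a comparison between two affine functions of $x$; Megiddo's machinery simulates the algorithm so as to enumerate the critical $x$-values at which such comparisons flip and to track the symbolic potentials active on each sub-interval. An extra Bellman-Ford round as a function of $x$ localizes the two endpoints of the feasible range. The advice stored is the sorted list of breakpoints and, per interval, the $n$ linear functions $h_v(\cdot)$.

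Instantiation is then immediate: given $r$, binary-search the stored breakpoints in $O(\log n)$ time; if $r$ falls in an infeasible tail, report a negative cycle; otherwise evaluate the $n$ linear potentials at $r$ in $O(n)$ time, reweight the $m$ edges in $O(m)$ time to obtain nonnegative weights, run Dijkstra with Fibonacci heaps in $O(m+n\log n)$ time, and undo the potential shift on the resulting distances. The main obstacle, and the source of the $\tilde{O}(n^4)$ factor, is the careful application of Megiddo's parametric search to Bellman-Ford so that (i) the comparison-resolution overhead is absorbed into the $\tilde{O}(n^4)$ total, and (ii) the number of produced breakpoints, and hence the advice size, stays polynomial in $n$. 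A secondary subtlety is pinning down the exact endpoints of the feasible range so that the instantiation phase can correctly flag negative cycles rather than returning a bogus ``reweighted'' distance.
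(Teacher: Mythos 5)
Your plan has a genuine gap, and it is exactly the one you flag as an "obstacle" in your last paragraph: the claim that the number of breakpoints of $h_v(x)=\delta_{G(x)}(s,v)$ "stays polynomial in $n$." It does not. Carstensen's construction (cited in the paper's related-work discussion) gives linear-parametric graphs in which the single-pair distance function already has $n^{\Omega(\log n)}$ linear pieces; a fortiori the single-source shortest-path tree from $s$ changes that many times, so the breakpoint list you propose to store, and the time to enumerate it by a parametric Bellman--Ford, are quasi-polynomial in the worst case, not $\tilde{O}(n^4)$. So the scheme of computing and storing the exact piecewise-linear potentials per interval cannot meet either the preprocessing-time or the advice-size bound.

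The paper sidesteps this entirely by observing that Johnson's reweighting does not need the \emph{exact} distance potentials at $r$ -- it only needs \emph{some} vertex functions $g_v$ with $W_{(u,v)}(r)+g_u(r)-g_v(r)\ge 0$. Its Lemma~\ref{lem:reweight} shows that a \emph{single} linear function per vertex suffices over the entire feasibility interval $[\alpha,\beta]$: set $g_v$ to be the affine interpolation of the Johnson potentials $h^{\alpha}(v)$ and $h^{\beta}(v)$ computed at the two endpoints (with a one-sided variant using the slopes $a_e$ when an endpoint is $\pm\infty$). Since both the edge weights and these $g_v$ are affine in $x$, the reweighted edge weight at any $r=(1-\delta)\alpha+\delta\beta$ is the convex combination of two nonnegative quantities and hence nonnegative. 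Thus no breakpoints need to be stored at all, and only two Bellman--Ford runs are needed for Lemma~\ref{lem:reweight} (cost $O(mn)$). Megiddo's technique is used only in Lemma~\ref{lem:negcyl}, to locate the endpoints $\alpha,\beta$ of the negative-cycle-free interval, which is where the $\tilde{O}(n^4)$ comes from; it is not used to trace the evolution of potentials across the interval. Your instantiation step is otherwise in the right spirit (evaluate potentials, reweight, Dijkstra, shift back), but without the interpolation trick the preprocessing cannot be made to fit the stated bound.
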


Our next algorithm applies to a more general setting where the weights
are polynomials of degree at most $d$. Furthermore, in this case our
goal is to have the instantiation phase answering distance queries
between any two vertices in {\em sublinear} time. Notice first that if
we allow exponential preprocessing time, this goal can be easily
achieved. This is not hard to see since the overall
possible number of shortest paths (when $x$ varies over the reals) is
$O(n!)$, or from Fredman's decision tree for shortest paths whose
height is $O(n^{2.5})$ \cite{Fr-1976}. But can we settle for {\em
sub-exponential} preprocessing time and still be able to have
sublinear instantiation time? Our next result achieves this goal.

\begin{theorem}\label{thm:gen} There exists a parametric algorithm for
the single pair shortest path problem in graphs weighted by degree $d$
polynomials, whose preprocessing time is $O(n^{(O(1) + \log f(d))\log
n})$ and instantiation time $O(\log^2 n)$, where $f(d)$ is the time
required to compute the intersection points of two degree $d$
polynomials. The size of the advice that the preprocessing algorithm
produces is $O(n^{(O(1) + \log d)\log n})$.
\end{theorem}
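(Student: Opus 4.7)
The plan is to precompute the shortest-path functions as piecewise polynomial functions of the parameter $x$ via min-plus matrix repeated squaring, and then answer each instantiation query by binary search in this representation.

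Let $A$ be the $n \times n$ weighted adjacency matrix whose $(i,j)$-entry is the degree-$d$ polynomial labelling the edge $(i,j)$ (with $+\infty$ where no edge exists and $0$ on the diagonal). In the preprocessing phase I would compute $A^{2^k}$ in the min-plus semiring for $k = 0, 1, \ldots, \lceil \log n \rceil$, storing each entry as a piecewise polynomial function, represented by a sorted list of breakpoints together with the polynomial that is valid on each interval. After $\lceil \log n \rceil$ squarings, the $(u,v)$-entry is exactly the shortest-path function $\delta(u,v)(x)$ on those values of $x$ for which $G(x)$ has no negative cycle, a condition that can be monitored during squaring by watching for negative diagonal entries and recorded in the advice so that the instantiation phase can report it correctly.

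The heart of the analysis is a careful accounting of the combinatorial complexity of these piecewise representations. Let $T_k$ be the maximum number of pieces across all entries of $A^{2^k}$, so $T_0=1$. To form $A^{2^{k+1}}[i,j]=\min_\ell\bigl(A^{2^k}[i,\ell]+A^{2^k}[\ell,j]\bigr)$, I first construct the $n$ pointwise sums (each with at most $2T_k$ pieces, obtained by merging two breakpoint lists) and then compute their lower envelope by divide-and-conquer, using the polynomial-intersection subroutine, which runs in $f(d)$ time, as the elementary operation. Since any two degree-$d$ polynomials intersect at $O(d)$ points, a standard envelope argument yields a recurrence of the form $T_{k+1} = O(n\cdot d\cdot T_k)$, hence $T_{\lceil\log n\rceil}=n^{(O(1)+\log d)\log n}$, which is precisely the advice-size bound. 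The per-squaring cost is dominated by the envelope constructions, namely $O(n^3 T_k \log(nT_k) \cdot f(d))$ summed over all $n^2$ entries. Unrolling the recurrence and multiplying by the $f(d)$ factors gives total preprocessing time $n^{(O(1)+\log f(d))\log n}$, as claimed.

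For the instantiation phase on a pair $(u,v)$, I binary search among the breakpoints of the stored function $A^{n}[u,v]$ to locate the interval containing the queried value $r$, then evaluate the associated degree-$d$ polynomial at $r$; since there are $n^{O(\log n)}$ breakpoints and each comparison requires only evaluating a polynomial at $r$, the search uses $O(\log^2 n)$ time (treating $d$ as a fixed constant that can be absorbed into $f(d)$). The main obstacle is controlling the piece-count growth: one must argue that the lower-envelope complexity at each squaring step grows only linearly (up to a $d$-factor) in the input arc count, rather than quadratically, since otherwise $T_k$ would grow super-exponentially and both the preprocessing bound and the $O(\log^2 n)$ query time would fail. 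A secondary technical issue is ensuring that the divide-and-conquer envelope routine runs in the claimed $O(N\log N\cdot f(d))$ time on polynomial arcs, which requires maintaining the envelope as a balanced search tree over its breakpoints so that each merge step can locate and splice in new intersections in logarithmic time per intersection event.
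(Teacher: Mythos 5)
Your proposal is correct and takes essentially the same approach as the paper: your min-plus matrix repeated squaring on piecewise-polynomial entries is, term for term, the paper's recursion $\mathrm{minBase}^{\ell+1}(u,v) = \min_{w}\bigl(\mathrm{minBase}^{\ell}(u,w)+\mathrm{minBase}^{\ell}(w,v)\bigr)$, with the same advice representation (sorted breakpoint lists plus per-interval polynomials), the same divide-and-conquer lower-envelope construction, and the same $O(\log^2 n)$ binary-search instantiation. One small remark: your stated envelope recurrence $T_{k+1}=O(n\,d\,T_k)$ unrolls to $O(nd)^{\log n}=n^{\log n + \log d + O(1)}$, which is actually tighter than the paper's $n^{(O(1)+\log d)\log n}$ — the paper instead uses the cruder merge bound (each pairwise envelope merge of sizes $t_1,t_2$ yields size $O(d(t_1+t_2))$, giving an $O(d)^{\log n}$ blow-up per squaring), stated in a Claim whose proof is omitted for space; both bounds satisfy the theorem as written, so this is book-keeping, not a gap.
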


The above result falls in the subject of sensitivity analysis where
one is interested in studying the effect on the optimal solution as the
value of the parameter changes. We give a linear-time (linear in the
output size) algorithm that computes the breaking points.

The practical and theoretical importance of shortest path problems
lead several researchers to consider fast algorithms that settle for
an approximate shortest path. For the general case (of real weighted
digraphs) most of the algorithms guarantee an {\em $\alpha$-stretch}
factor. Namely, they compute a path whose length is at most $\alpha
\delta(u,v)$. We mention here the $(1+\epsilon)$-stretch algorithm
of Zwick for the all-pairs shortest path problem, that runs in
$\tilde{O}(n^\omega)$ time when the weights are non-negative reals
\cite{Zw-2002}. Here $\omega < 2.376$ is the matrix multiplication
exponent \cite{CoWi-1990}.

Here we consider probabilistic additive-approximation algorithms, or
{\em surplus} algorithms, that work for linear weights which may have
positive and negative values (as long as there is no negative weight
cycle).  We say that a shortest path algorithm has an
$\epsilon$-surplus if it computes paths whose lengths are at most
$\delta(u,v)+\epsilon$.  We are unaware of any truly subcubic algorithm
that guarantees an $\epsilon$-surplus approximation, and which
outperforms the fastest general all-pairs shortest path algorithm
\cite{Ch-2007}.

In the linear-parametric setting, it is easy to obtain
$\epsilon$-surplus parametric algorithms whose preprocessing time is
$O(n^4)$ time, and whose instantiation time, for any ordered pair of
queried vertices $u,v$ is constant. It is assumed instantiations are
taken from some interval $I$ whose length is independent of $n$.
Indeed, we can partition $I$ into $O(n)$ subintervals $I_1,I_2,\ldots$
of size $O(1/n)$ each, and solve, in cubic time (say, using
\cite{Fl-1962}), the exact all-pairs solution for any instantiation
$r$ that is an endpoint of two consecutive intervals.  Then, given any
$r \in I_j=(a_j,b_j)$, we simply look at the solution for $b_j$ and
notice that we are (additively) off from the right answer only by
$O(1)$. Standard scaling arguments can make the surplus smaller than
$\epsilon$.  But do we really need to spend $O(n^4)$ time for
preprocessing? In other words, can we invest (significantly) less than
$O(n^4)$ time and still be able to answer instantiated distance
queries in $O(1)$ time? The following result gives a positive answer
to this question.

\begin{theorem}\label{thm:approx} Let $\epsilon > 0$, let
$[\alpha,\beta]$ be any fixed interval and let $\gamma$ be a fixed
constant. Suppose $G$ is a linear-parametric graph that has no
negative weight cycles in the interval $[\alpha,\beta]$, and for which
every edge weight $a_e+xb_e$ satisfies $|a_e|\leq\gamma$. There is a
parametric randomized algorithm for the $\epsilon$-surplus shortest
path problem, whose preprocessing time is $\tilde{O}(n^{3.5})$ and
whose instantiation time is $O(1)$ for a single pair, and hence
$O(n^2)$ for all pairs.
\end{theorem}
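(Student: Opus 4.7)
My plan is to reduce the $\epsilon$-surplus parametric problem to a standard approximate all-pairs computation at a finite net of instantiation values. Cover $[\alpha,\beta]$ by a net $r_1<\cdots<r_t$ with $t=\Theta(n/\epsilon)$ equally spaced samples; after rescaling using the bound $|a_e|\le\gamma$ (and the implicit bound on $|b_e|$ coming from a constant-length interval), any path of at most $n-1$ edges changes in length by $O(\epsilon)$ between two consecutive samples. A query $(u,v,r)$ then reduces to locating the bin containing $r$ (in $O(1)$ time once $r_1,\ldots,r_t$ are arranged in an arithmetic progression) and reading a precomputed approximate distance $\hat d_j(u,v)$ at the nearest sample $r_j$. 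Storing these $t$ tables uses $O(n^3)$ space, which is well within our allowance, and the total surplus is $O(\epsilon)$, which I will rescale down to $\epsilon$.

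\textbf{Paragraph 2: Per-sample APSP in $\tilde O(n^{2.5})$ time.} It remains to build the $t$ tables $\hat d_j$ within a budget of $\tilde O(n^{2.5})$ per sample. I would use a randomized hub scheme in the spirit of Ullman--Yannakakis and its APSP refinements: sample a set $S\subset V$ of $|S|=\Theta(\sqrt n\log n)$ hubs uniformly at random and, at each $r_j$, compute (i) $d_j(h,\cdot)$ and $d_j(\cdot,h)$ for every hub $h\in S$ by a single-source shortest-path computation, and (ii) a truncated Dijkstra from each $u\in V$, settling only the first $\Theta(\sqrt n\log n)$ vertices (which costs $\tilde O(n^{1.5})$ per source). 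The estimate
\[
\hat d_j(u,v)\;=\;\min\Bigl\{\min_{h\in S}\bigl(d_j(u,h)+d_j(h,v)\bigr),\;\widetilde d_j(u,v)\Bigr\},
\]
where $\widetilde d_j(u,v)$ is the value (if any) returned by the truncated Dijkstra out of $u$, is exact with high probability for every pair: either $v$ is among the first $\Theta(\sqrt n\log n)$ vertices reached from $u$ and $\widetilde d_j$ captures it, or every $u$-to-$v$ shortest path contains at least $\sqrt n\log n$ vertices and therefore hits a hub with probability $1-n^{-\Omega(1)}$. A union bound over the $t\cdot n^2=\mathrm{poly}(n)$ many (sample, pair) events preserves the high-probability guarantee.

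\textbf{Paragraph 3: The main obstacle --- reweighting across negative edges.} The $\tilde O(n^{2.5})$ estimate in Paragraph~2 tacitly assumes that each SSSP call is a Dijkstra run on a non-negatively weighted graph; however $G(r_j)$ may contain negative edges (the hypothesis only rules out negative cycles on $[\alpha,\beta]$), and running Bellman--Ford once per sample would cost $\Omega(nm)$ per sample and inflate the total to $\Omega(n^4)$. This is the main difficulty. To get past it I would compute, once in the preprocessing phase, piecewise-linear vertex potentials $h_v(x)$ such that $a_{u,v}+xb_{u,v}+h_u(x)-h_v(x)\ge 0$ holds throughout $[\alpha,\beta]$, in the spirit of the parametric Johnson step underlying Theorem~\ref{thm:neg}, and cache them. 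Once the $h_v$'s are available, each sample reweights in $O(m)$ time and the hub SSSPs become standard Dijkstras. The remaining work is to bring this parametric reweighting from the $\tilde O(n^4)$ cost of Theorem~\ref{thm:neg} down to $\tilde O(n^{3.5})$; the $\epsilon$-surplus tolerance should enable a random-sampling variant of the Megiddo-style construction which only certifies nonnegativity at the $\Theta(n/\epsilon)$ sample instantiations rather than identically in $x$, saving a $\sqrt n$ factor and matching the per-sample budget. Summing, the preprocessing is $\tilde O(n^{3.5})$ for the potentials plus $n\cdot\tilde O(n^{2.5})=\tilde O(n^{3.5})$ for the per-sample hub tables, and the per-pair instantiation cost is $O(1)$ as claimed.
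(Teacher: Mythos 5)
Your high-level architecture (discretize the parameter interval, reweight once so Dijkstra applies, and use random hubs to catch long paths) matches the paper's, but two of your key steps do not hold up. First, the truncated-Dijkstra claim in Paragraph 2 is wrong: ``$v$ is not among the first $\Theta(\sqrt{n}\log n)$ vertices settled from $u$'' does \emph{not} imply ``every $u$-to-$v$ shortest path has at least $\sqrt{n}\log n$ vertices.'' Consider a star centred at $u$ in which $v$ is the farthest leaf: the shortest path $u\to v$ has one edge, yet $v$ is the last vertex settled. Your hub argument only covers paths with many \emph{vertices}, so the complement case needs to be ``few \emph{hops},'' not ``small \emph{distance}.'' The paper therefore computes \emph{hop-limited} APSP (shortest paths restricted to at most $t=4\sqrt{n}\ln n$ vertices) via the $O(n^3)$ repeated-squaring/\cite{AHU-1974}-style method, and crucially it does so only at a \emph{coarse} grid of $N_0=\Theta(\sqrt{n}\ln n/\epsilon)$ samples (the error over a short path tolerates a wider spacing), while the hub SSSP computations use the \emph{fine} grid of $N_1=\Theta(n/\epsilon)$ samples. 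Your single fine grid cannot afford hop-limited APSP at every sample ($\Theta(n/\epsilon)\cdot O(n^3)=O(n^4/\epsilon)$), so even after fixing the correctness issue you would need this two-grid decomposition.

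Second, Paragraph 3 misjudges the reweighting cost and proposes a speculative fix that is not needed. The $\tilde{O}(n^4)$ in Theorem~\ref{thm:neg} is entirely the Megiddo step that \emph{finds} the interval $[\alpha,\beta]$ (Lemma~\ref{lem:negcyl}); the reweighting itself is Lemma~\ref{lem:reweight} and costs only $O(mn)$. In Theorem~\ref{thm:approx} the interval is \emph{given}, so no Megiddo-style search is required at all: run Johnson's algorithm once at $\alpha$ and once at $\beta$ to get potentials $h^\alpha,h^\beta$, and set $g_v$ to be the linear interpolation between $(\alpha,h^\alpha(v))$ and $(\beta,h^\beta(v))$. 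Because every edge weight is linear in $x$, the reweighted edge weight at any $r\in[\alpha,\beta]$ is a convex combination of the nonnegative reweighted weights at $\alpha$ and $\beta$, hence nonnegative everywhere in the interval; the potentials are genuinely linear, not piecewise-linear, and no ``random-sampling Megiddo variant'' is needed. One more point you should address: the instantiation phase must output a weight of an \emph{actual} path in $G(r)$, so returning the cached distance $\hat d_j(u,v)$ at the nearest sample $r_j$ is not quite valid; the paper instead stores the linear function $f_i(u,v)$ summing the edge functions along the cached path and evaluates it at $r$, so the returned number is the true weight of a concrete path and hence an upper bound on $\delta(u,v)$.
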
 We note that this algorithm works in the restricted
addition-comparison model.  We also note that given an ordered pair
$u,v$ and $r \in [\alpha,\beta]$, the algorithm outputs, in $O(1)$
time, a weight of an actual path from $u$ to $v$ in $G(r)$, and points
to a linked list representing that path.  Naturally, if one wants to
output the vertices of this path then the time for this is linear in
the length of the path.

The rest of this paper is organized as follows. The next subsection
shortly surveys related research on parametric shortest path
problems. In the three sections following it we prove Theorems
\ref{thm:neg}, \ref{thm:gen} and \ref{thm:approx}.
Section~\ref{sec:conclusion} contains some concluding remarks and open
problems.

\subsection{Related research} Several researchers have considered parametric
versions of combinatorial optimization problems. In particular
function-weighted graphs (under different names) have been
extensively studied in the subject of sensitivity analysis (see
\cite{vHKRW89}) where they study the effect on the optimal solution
as the parameter value changes.

Murty~\cite{Mur80} showed that for parametric linear programming
problems the optimal solution can change exponentially many times
(exponential in the number of variables). Subsequently, Carstensen
\cite{Ca-1983} has shown that there are constructions  for which
the number of shortest path changes while $x$ varies over the reals
is $n^{\Omega(\log n)}$. In fact, in her example each linear
function is of the form $a_e+xb_e$ and both $a_e$ and $b_e$ are
positive, and $x$ varies in $[0,\infty]$. Carstensen also proved
that this is tight. In other words, for any linear-parametric graph
the number of changes in the shortest paths is $n^{O(\log n)}$. A
simpler proof was obtained by Nikolova et al. \cite{Ni-Mi-2006},
that also supply an $n^{O(\log n)}$ time algorithm to compute the
path breakpoints. Their method, however, does not apply to the case
where the functions are not linear, such as in the case of degree
$d$ polynomials. Gusfield~\cite{gus} also gave a proof for the upper bound
of the number of breakpoints in the linear function version of the parametric shortest
path problem, in addition to studying a number of other parametric
problems.

Karp and Orlin \cite{KaOr-1981}, and, later, Young, Tarjan, and
Orlin~\cite{Yo-Or-1991} considered a special case of the
linear-parametric shortest path problem.  In their case, each edge
weight $e$ is either some fixed constant $b_e$ or is of the form
$b_e-x$. It is not too difficult to prove that for any given vertex
$v$, when $x$ varies from $-\infty$ to the largest $x_0$ for which
$G(x_0)$ has no negative weight cycle (possibly $x_0=\infty$), then
there are at most $O(n^2)$ distinct shortest path trees from $v$ to
all other vertices. Namely, for each $r \in [-\infty,x_0]$ one of
the trees in this family is a solution for single-source shortest
path in $G(r)$. The results in \cite{KaOr-1981,Yo-Or-1991} cleverly
and compactly compute all these trees, and the latter does it in
$O(nm+n^2\log n)$ time.

\section{Proof of Theorem~\ref{thm:neg}}

The proof of Theorem~\ref{thm:neg} follows from the following two
lemmas.

\begin{lemma}\label{lem:negcyl} Given a linear-weighted graph
$G=(V,E,W)$, there exist $\alpha, \beta \in \mathbb{R}\cup \{-\infty\}
\cup \{+\infty\}$ such that $G(r)$ has no negative cycles if and only
if $\alpha \leq r \leq \beta$.  Moreover $\alpha$ and $\beta$ can be
found in $\tilde{O}(n^4)$ time.
\end{lemma}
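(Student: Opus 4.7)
For any cycle $C$ of $G$, its total weight at parameter $r$ is $W_C(r)=\sum_{e\in C}(a_e+b_e r)=A_C+B_C r$, which is linear in $r$. Hence the set $S_C=\{r:W_C(r)\geq 0\}$ is a closed half-line, all of $\mathbb{R}$, or empty. The set of $r$ for which $G(r)$ has no negative cycle is $\bigcap_C S_C$, an intersection of closed half-lines of $\mathbb{R}$; such an intersection is a (possibly unbounded, possibly empty) closed interval, which gives $\alpha,\beta\in\mathbb{R}\cup\{-\infty,+\infty\}$ such that $G(r)$ is negative-cycle-free iff $\alpha\leq r\leq\beta$, with $\alpha>\beta$ admitted to encode the empty case.

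\textbf{Computing $\beta$ (and symmetrically $\alpha$).} I would compute $\beta$ via Megiddo's parametric search \cite{M-79}; the value $\alpha$ is obtained by running the same procedure after substituting $r\mapsto -r$. The decision oracle is the standard Bellman-Ford algorithm, which on any fixed weighted graph detects a negative cycle in $O(nm)$ time. The plan is to simulate Bellman-Ford symbolically over the parametric weights $a_e+b_e r$: at every step each distance estimate is a single linear function $a+br$, and every edge-relaxation comparison between two such functions is controlled entirely by the position of $\beta$ relative to their crossing $r_0=(a_2-a_1)/(b_1-b_2)$. To keep the overhead low the comparisons are batched Bellman-Ford round by round: the $O(m)$ candidate crossings generated in a single round are sorted and then resolved simultaneously by binary-searching the sorted list using $O(\log m)$ oracle calls, each of cost $O(nm)$. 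Summing over the $n$ rounds gives a total of $\tilde{O}(n\cdot nm)=\tilde{O}(n^4)$ in dense graphs, matching the claimed bound.

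\textbf{Main obstacle.} The principal subtlety is that the predicate ``$G(r)$ has a negative cycle'' is not monotone in $r$: it holds on $(-\infty,\alpha)\cup(\beta,+\infty)$ and fails on $[\alpha,\beta]$, so a raw Bellman-Ford answer at a probe point $r_0$ does not by itself reveal on which side of $\beta$ the point $r_0$ lies. I would break this obstruction by first producing an anchor $r^{\star}\in[\alpha,\beta]$ (or certifying that no such anchor exists, so $[\alpha,\beta]=\emptyset$) using the concavity of the piecewise-linear map $r\mapsto\min_C W_C(r)$, and then running the parametric search for $\beta$ only on the half-line $[r^{\star},\infty)$, on which the predicate is genuinely monotone (false, then true at $r=\beta$). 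Making the anchor-finding step itself fit inside the same $\tilde{O}(n^4)$ budget — effectively a one-dimensional concave maximization handled by a nested parametric-search argument — is the part of the proof that requires the most care.
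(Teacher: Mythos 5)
Your existence argument is the same as the paper's: each cycle weight is linear in $r$, so the negative-cycle-free set is an intersection of closed half-lines, hence a closed (possibly empty or unbounded) interval. Your computational plan also matches the paper's at the top level: the paper likewise invokes Megiddo's parametric search, framed as solving the one-variable LP ``minimize $x$ subject to $\mathrm{wt}(C_i)(x)\geq 0$ for all cycles $C_i$,'' and cites Megiddo's minimum-ratio-cycle technique in a single sentence without further elaboration.

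The non-monotonicity you flag is a genuine subtlety that the paper glosses over, so you are right to address it, but the anchor-finding detour is exactly the piece you leave unfinished, and there is a cleaner fix that removes it. At any probe point $r_0$, run Bellman-Ford on $G(r_0)$; if a negative cycle $C$ is returned, inspect the slope $A_C=\sum_{e\in C}a_e$ of its (linear) weight. A positive slope forces $r_0<\alpha$, since $\alpha$ is by definition the largest root among positive-slope cycles; a negative slope forces $r_0>\beta$; and when the LP is feasible ($\alpha\leq\beta$) these two conclusions cannot both hold for the same $r_0$, so the side is determined unambiguously. A constant negative cycle, or the simultaneous presence of negative cycles of both slope signs at a single probe, certifies infeasibility. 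This gives a two-sided oracle for the monotone predicate ``$r_0\geq\alpha$'' (symmetrically for $\beta$) usable directly inside Megiddo's search, eliminating your nested concave-maximization step while staying within the $\tilde O(n^4)$ budget you already computed.
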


\begin{lemma}\label{lem:reweight} Let $G=(V,E,W)$ be a linear-weighted
graph. Also let $\alpha, \beta \in \mathbb{R}\cup \{-\infty\} \cup
\{+\infty\}$ be such that at least one of them is finite and for all
$\alpha \geq r \geq \beta$ the graph $G(r)$ has no negative cycle.
Then for every vertex $v \in V$ there exists a linear function
$g^{[\alpha,\beta]}_v$ such that if the new weight function $W'$ is
given by
$$
W'\left((u,v)\right) = W\left((u,v)\right) + g^{[\alpha, \beta]}_u -
g^{[\alpha, \beta]}_v
$$
then the new linear-weighted graph $G'=(V,E,W')$ has the property that
for any real $\alpha \leq r \leq \beta$ all the edges in $G'(r)$ are
non-negative.  Moreover the functions $g^{[\alpha, \beta]}_v$ for all
$v \in V$ can be found in $O(mn)$ time.
\end{lemma}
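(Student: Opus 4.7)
The plan is to search for the potentials in the form $g_v(r)=c_v+d_v r$ and exploit that the reweighted edge weight
\begin{equation*}
w'_{uv}(r)=W((u,v))(r)+g_u(r)-g_v(r)=(a_{uv}+c_u-c_v)+(b_{uv}+d_u-d_v)\,r
\end{equation*}
is itself a linear function of $r$. Hence $w'_{uv}\ge 0$ throughout $[\alpha,\beta]$ if and only if it is non-negative at the two extreme points of the interval, where an extreme point may be $\pm\infty$.

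When both $\alpha$ and $\beta$ are finite, a linear function that is non-negative at $\alpha$ and at $\beta$ is non-negative on the whole interval. It therefore suffices to produce ordinary Johnson potentials at the two instantiations $G(\alpha)$ and $G(\beta)$ and linearly interpolate. Concretely, we adjoin a virtual source $s$ with zero-weight edges to every vertex, run Bellman--Ford on $G(\alpha)$ and on $G(\beta)$ to obtain $h_\alpha(v)$ and $h_\beta(v)$, and set
\begin{equation*}
g_v(r) \;=\; \tfrac{\beta-r}{\beta-\alpha}\,h_\alpha(v)\;+\;\tfrac{r-\alpha}{\beta-\alpha}\,h_\beta(v).
\end{equation*}
The endpoint inequalities are Johnson's triangle inequality, and linearity of $w'_{uv}$ carries the property into the interior.

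Now suppose exactly one endpoint is infinite; say $\alpha=-\infty$ and $\beta$ is finite (the symmetric case is analogous). For $w'_{uv}(r)$ to remain non-negative as $r\to-\infty$ its slope $b_{uv}+d_u-d_v$ must be non-positive, i.e.\ $d_v\ge d_u+b_{uv}$ for every edge $(u,v)$. We first build such slopes. If the $b$-weighted graph had a cycle $C$ of positive $b$-sum, the weight of $C$ in $G(r)$ would tend to $-\infty$ as $r\to-\infty$, contradicting the hypothesis; thus the graph with edge weights $-b_{uv}$ has no negative cycle, and a single Bellman--Ford pass from a virtual source yields values $-d_v$ satisfying $-d_v\le -d_u-b_{uv}$, which rearranges to the desired inequality. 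Next, we enforce $w'_{uv}(\beta)\ge 0$, which amounts to $c_v-c_u\le w_{uv}(\beta)+\beta(d_u-d_v)$ for every edge. This is Johnson reweighting for the auxiliary weights $\widetilde w_{uv}=w_{uv}(\beta)+\beta(d_u-d_v)$; since the $d$-terms telescope around any cycle, the $\widetilde w$-cycle sums coincide with the $G(\beta)$-cycle sums and are therefore non-negative, so a second Bellman--Ford call produces the intercepts $c_v$. The resulting $w'_{uv}(r)$ is linear, has non-positive slope, and is non-negative at $r=\beta$, hence is non-negative on the entire half-line $(-\infty,\beta]$.

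The total running time is dominated by the two Bellman--Ford executions and is $O(mn)$, as required. The main obstacle is the unbounded case: direct interpolation is unavailable there, and the fix is to decompose the potential into a slope computed from the $b$-only weights (which controls the behaviour at the unbounded end) and an intercept computed from $G(\beta)$ after the slope correction (which controls the finite end); the assumption that $G(r)$ has no negative cycle throughout $[\alpha,\beta]$ is exactly what guarantees that each of the two Bellman--Ford invocations is well-defined.
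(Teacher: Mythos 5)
Your proof is correct and follows essentially the same strategy as the paper: for two finite endpoints you linearly interpolate Johnson potentials computed at $G(\alpha)$ and $G(\beta)$, and for one infinite endpoint you compute a linear potential whose slope comes from a Bellman--Ford run on the slope-only graph and whose value at the finite endpoint comes from a second Bellman--Ford run. The only (cosmetic) divergence is in the unbounded case: you run the intercept-determining Bellman--Ford on the $d$-shifted weights $\widetilde w_{uv}=w_{uv}(\beta)+\beta(d_u-d_v)$, whereas the paper runs Johnson directly on $G(\alpha)$ (its finite endpoint) and verifies that adding $(x-\alpha)h^{\infty}(v)$ works by splitting the reweighted weight into two independently non-negative terms; both require exactly two Bellman--Ford invocations and give $O(mn)$.
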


So given a linear-weighted graph $G$, we first use Lemma
\ref{lem:negcyl} to compute $\alpha$ and $\beta$.  If at least one of
$\alpha$ and $\beta$ is finite then using Lemma~\ref{lem:reweight} we
compute the $n$ linear functions $g^{[\alpha,\beta]}_v$, one for each
$v \in V$.  If $\alpha = -\infty$ and $\beta = +\infty$, then using
Lemma~\ref{lem:reweight} we compute the $2n$ linear functions
$g^{[\alpha, 0]}_v$ and $g^{[0,\beta]}_v$. These linear functions will
be the advice that the preprocessing algorithm produces.  The above
lemmas guarantee us that the advice can be computed in time
$\tilde{O}(n^4)$, that is the preprocessing time is $\tilde{O}(n^4)$.

Now when computing the single source shortest path problem from vertex
$v$ for the graph $G(r)$ our algorithm proceeds as follows:
\begin{enumerate}
\item If $r<\alpha$ or $r>\beta$ output ``$-\infty$'' as there exists
a negative cycle (such instances are considered invalid).
\item If $\alpha\leq r\leq \beta$ and at least one of $\alpha$ or
$\beta$ is finite then compute $g_u(r)$ for all $u \in V$. Use these
to re-weight the edges in the graph as in Johnson's algorithm
\cite{Jo-1977}.  If $\alpha = -\infty$ and $\beta = +\infty$ then if
$r\leq 0$ compute $g^{[\alpha,0]}_u(r)$ for all $u \in V$ and if
$r\geq 0$ compute $g^{[0,\beta]}_u(r)$ for all $u \in V$.  Notice that
after the reweighing we have an instance of $G'(r)$.
\item Use Dijkstra's algorithm \cite{Di-1959} to solve the single
source shortest path problem in $G'(r)$. Dijkstra's algorithm applies
since $G'(r)$ has no negative weight edges.  The shortest paths tree
returned by Dijkstra's algorithms applied to $G'(r)$ is also the
shortest paths tree in $G(r)$. As in Johnson's algorithm, we use the
results $d'(v,u)$ of $G'(r)$ to deduce $d(v,u)$ in $G(r)$ since, by
Lemma \ref{lem:reweight} $d(v,u)=d'(v,u)-g_v(r)+g_u(r)$.
\end{enumerate} The running time of the instantiation phase is
dominated by the running time of Dijkstra's algorithm which is $O(m +
n\log n)$ \cite{FrTa-1987}.

\subsection{Proof of Lemma~\ref{lem:negcyl}}

Since the weight on the edges of the graph $G$ are linear functions,
we have that the weight of any directed cycle in the graph is also a
linear function.  Let $C_1, C_2, \ldots, C_T$ be the set of all
directed cycles in the graph.  The linear weight function of a cycle
$C_i$ will be denoted by $\mbox{wt}(C_i)$.  If $\mbox{wt}(C_i)$ is not the
constant function, then let $\gamma_i$ be the real number for which
the linear equation $\mbox{wt}(C_i)$ evaluates to $0$.

\noindent Let $\alpha$ and $\beta$ be defined as follows:
$$
\alpha = \max_i \left\{ \gamma_i \mid \mbox{ $\mbox{wt}(C_i)$ has a positive
slope}\right\}.
$$
$$
\beta = \min_i \left\{ \gamma_i \mid \mbox{ $\mbox{wt}(C_i)$ has a negative
slope}\right\}.
$$
Note that if $\mbox{wt}(C_i)$ has a positive slope then
$\gamma_i = \min_x \left\{ \mbox{wt}(C_i)(x) \geq 0\right\}.$
Thus for all $x \geq \gamma_i$ the value of $\mbox{wt}(C_i)$ evaluated at
$x$ is non-negative. So by definition for all $x \geq \alpha$ the
value of the $\mbox{wt}(C_i)$ is non-negative if the slope of $\mbox{wt}(C_i)$ is
positive, and for any $x<\alpha$ there exists a cycle $C_i$ such that
$\mbox{wt}(C_i)$ has positive slope and $\mbox{wt}(C_i)(x)$ is negative.
Similarly, for all $x \leq \beta$ the value of the $\mbox{wt}(C_i)$ is
non-negative if the slope of $\mbox{wt}(C_i)$ is negative and for any
$x>\beta$ there exists a cycle $C_i$ such that $\mbox{wt}(C_i)$ has negative
slope and $\mbox{wt}(C_i)(x)$ is negative.

This proves the existence of $\alpha$ and $\beta$.  There are,
however, two bad cases that we wish to exclude.  Notice that if
$\alpha > \beta$ this means that for any evaluation at $x$, the
resulting graph has a negative weight cycle.  The same holds if there
is some cycle for which $\mbox{wt}(C_i)$ is constant and negative.  Let us
now show how $\alpha$ and $\beta$ can be efficiently computed whenever
these bad cases do not hold.  Indeed, $\alpha$ is the solution to the
following Linear Program (LP), which has a feasible solution if and
only if the bad cases do not hold.
\begin{center} \framebox{
\parbox[s]{3.0in}{\

{\bf Minimize $x$} under the constraints

\bigskip {\bf $\forall i$,  $\mbox{wt}(C_i)(x) \geq 0$}.

\bigskip }}
\end{center} This is an LP on one variable, but the number of
constraints can be exponential.
Using Megiddo's\cite{M-79}
technique for finding the minimum ratio cycles we can solve the
linear-program in $O(n^4\log n)$ steps.

\subsection{Proof of Lemma~\ref{lem:reweight}}

Let $\alpha$ and $\beta$ be the two numbers such that for all
$\alpha\leq r \leq \beta$ the graph $G(r)$ has no negative cycles and
at least one of $\alpha$ and $\beta$ is finite.

First let us consider the case when both $\alpha$ and $\beta$ are
finite. Recall that, given any number $r$, Johnson's algorithm
associates a weight function $h^{r}:V\rightarrow \mathbb{R}$ such
that, for any edge $(u,v)\in E$,
$$
W_{(u,v)}(r) + h^r(u) - h^r(v) \geq 0.
$$
(Johnson's algorithm computes this weight function by running the
Bellman-Ford algorithm over $G(r)$).  Define the weight function
$g^{[\alpha,\beta]}_v$ as
$$
g^{[\alpha, \beta]}_v(x) = \left(\frac{h^{\beta}(v) -
h^{\alpha}(v)}{\beta - \alpha}\right)x + h^{\alpha}(v) -
\left(\frac{h^{\beta}(v) - h^{\alpha}(v)}{\beta -
\alpha}\right)\alpha~.
$$
This is actually the equation of the line joining $(\alpha,
h^{\alpha}(v))$ and $(\beta, h^{\beta}(v))$ in $\mathbb{R}^2$.

\noindent Now we need to prove that for every $\alpha \leq r\leq
\beta$ and for every $(u,v)\in V$,
$$
W_{(u,v)}(r) + g^{[\alpha, \beta]}_{u}(r) - g^{[\alpha,\beta]}_{v}(r)
\geq 0~.
$$
Since $\alpha \leq r \leq \beta$, one can write $r = (1-\delta)\alpha
+ \delta\beta$ where $1 \geq \delta \geq 0$. Then for all $v\in V$,
$$
g^{[\alpha, \beta]}_v(r) = (1 - \delta) h^{\alpha}(v) + \delta
h^{\beta}(v)~.
$$
Since $W_{(u,v)}(r)$ is a linear function we can write
$$
W_{(u,v)}(r) = (1-\delta)W_{(u,v)}(\alpha) + \delta W_{(u,v)}(\beta)~.
$$
So after re-weighting the weight of the edge $(u,v)$ is
$$
(1-\delta)W_{(u,v)}(\alpha) + \delta W_{(u,v)}(\beta) +
(1-\delta)h^{\alpha}(u) + \delta h^{\beta}(u) - (1 - \delta)
h^{\alpha}(v) - \delta h^{\beta}(v)~.
$$
Now this is non-negative as by the definition of $h^{\beta}$ and
$h^{\alpha}$ we know that both $W_{(u,v)}(\beta) + h^{\beta}(u) -
h^{\beta}(v)$ and $W_{(u,v)}(\alpha) + h^{\alpha}(u) - h^{\alpha}(v)$
are non-negative.

We now consider the case when one of $\alpha$ or $\beta$ is not
finite.  We will prove it for the case where $\beta = +\infty$. The
case $\alpha = -\infty$ follows similarly.  Consider the simple
weighted graph $G_{\infty}=(V,E,W_{\infty})$ where the weight function
$W_{\infty}$ is defined as: if the weight of the edge $e$ is $W(e) =
a_ex + b_e$ then $W_{\infty}(e) = a_e$.

We run the Johnson's algorithm on the graph $G_{\infty}$. Let
$h^{\infty}(v)$ denote the weight that Johnson's algorithm associates
with the vertex $v$. Then define the weight function
$g^{[\alpha,\infty]}_v$ as
$$
g^{[\alpha,\infty]}_v(x) = h^{\alpha}(v) + (x-\alpha)h^{\infty}(v)~.
$$
We need to prove that for every $\alpha \leq r$ and for every
$(u,v)\in V$,
$$
W_{(u,v)}(r) + g^{[\alpha,\infty]}_{u}(r) - g^{[\alpha,
\infty]}_{v}(r) = W_{(u,v)}(r) + h^{\alpha}(u) + (r-\alpha)
h^{\infty}(u) - h^{\alpha}(v) - (r-\alpha) h^{\infty}(v) \geq 0~.
$$
Let $r = \alpha + \delta$ where $\delta \geq 0$. By the linearity of
$W$ we can write $W_{(u,v)}(r) = W_{(u,v)}(\alpha) + \delta
a_{(u,v)}$, where $W_{(u,v)}(r) = a_{(u,v)}r + b_{(u,v)}$. So the
above inequality can be restated as
$$
W_{(u,v)}(\alpha) + \delta a_{(u,v)} + h^{\alpha}(u) + \delta
h^{\infty}(u) - h^{\alpha}(v) - \delta h^{\infty}(v) \geq 0~.
$$
This now follows from the fact that both $W_{(u,v)}(\alpha) +
h^{\alpha}(u) - h^{\alpha}(v)$ and $a_{(u,v)} + h^{\infty}(u) -
h^{\infty}(v)$ are non-negative.

Since the running time of the reweighing part of Johnson's algorithm
takes $O(mn)$ time, the overall running time of computing the
functions $g^{[\alpha,\beta]}_v$ is $O(mn)$, as claimed.

\section{Proof of Theorem~\ref{thm:gen}}

In this section we construct a parametric algorithm that computes the
distance $\delta(u,v)$ between a given pair of vertices. If one is
interested in the actual path realizing this distance, then it can be
found with some extra book-keeping that we omit in the proof.

The processing algorithm will output the following advice: for any
pair $(u,v) \in V\times V$ the advice consists of a set of $t+2$ increasing
real numbers $-\infty = b_0 < b_1 < \dots < b_{t} < b_{t+1} = \infty$
and an ordered set of degree-$d$ polynomials $p_0, p_1, \dots, p_t$,
such that for all $b_i \leq r \leq b_{i+1}$ the weight of a shortest
path in $G(r)$ from $u$ to $v$ is $p_i(r)$.  Note that each $p_i$
corresponds to the weight of a path from $u$ to $v$. Thus if we are
interested in computing the exact path then we need to keep track of
the path corresponding to each $p_i$.

Given $r$, the instantiation algorithm has to find the $i$ such that
$b_i \leq r \leq b_{i+1}$ and then output $p_i(r)$. So the output
algorithm runs in time $O(\log t)$. To prove our result we need to
show that for any $(u,v) \in V\times V$ we can find the advice in time
$O(f(d) n)^{\log n}$. In particular this will prove that $t =
O(dn)^{\log n}$ and hence the result will follow.

\begin{definition} A \textit{minBase} is a sequence of increasing real
numbers $-\infty = b_0 < b_1 < \dots < b_t < b_{t+1} = \infty$ and an
ordered set of degree-$d$ polynomials $p_0, p_1, \dots, p_t$, such
that for all $b_i \leq r \leq b_{i+1}$ and all $j \neq i$, $p_i(r)
\leq p_j(r)$.
\end{definition} We call the sequence of real numbers the {\em
breaks}. We call each interval $[b_i, b_{i+1}]$ the $i$-th interval of
the minBase and the polynomial $p_i$ the $i$-th polynomial. The {\em
size} of the minBase is $t$.

The final advice that the preprocessing algorithm produces is a
minBase for every pair $(u,v)\in V\times V$ where the $i$-th
polynomial has the property that $p_i(r)$ is the distance from $u$ to
$v$ in $G(r)$ for each $b_i \le r \le b_{i+1}$.

\begin{definition} A $minBase^{\ell}(u,v)$ is a minBase corresponding
to the ordered pair $u,v$, where the $i$-th polynomial $p_i$ has the
property that for $r \in [b_i,b_{i+1}]$, $p_i(r)$ is the length of a
shortest path from $u$ to $v$ in $G(r)$, that is taken among all paths
that use at most $2^{\ell}$ edges.

\noindent A $minBase^{\ell}(u,w,v)$ is a minBase corresponding to the
ordered triple $(u,w, v)$ where the $i$-th polynomial $p_i$ has the
property that for each $r \in [b_i,b_{i+1}]$, $p_i(r)$ is the sum of
the lengths of a shortest path from $u$ to $w$ in $G(r)$, among all
paths that use at most $2^{\ell}$ edges, and a shortest path from $w$
to $v$ in $G(r)$, among all paths that use at most $2^{\ell}$ edges.
\end{definition} Note that in both of the above definitions some of
the polynomials can be $+\infty$ or $-\infty$.

\begin{definition} If $B_1$ and $B_2$ are two minBases (not
necessarily of the same size), with polynomials $p^1_i$ and $p^2_j$,
we say that another minBase with breaks $b'_k$ and polynomials $p'_k$
is $\min(B_1 + B_2)$ if the following holds.
\begin{enumerate}
\item For all $k$ there exist $i,j$ such that $p'_k = p^1_i +p^2_j$,
and
\item For $b'_k \leq r \leq b'_{k+1}$ and for all $i,j$ we have
$p'_k(r) \le p^1_i(r) + p^2_j(r)$.
\end{enumerate}
\end{definition}

\begin{definition} If $B_1,B_2,\ldots,B_s$ are $s$ minBases (not
necessarily of the same size), with polynomials $p^1_{i_1}, p^2_{i_2},
\ldots, p^s_{i_s}$, another minBase with breaks $b'_k$ and polynomials
$p'_k$ is $\min\{B_1,B_2,\ldots,B_s\}$ if the following holds.
\begin{enumerate}
\item For all $k$ there exist $q$ such that $p'_k = p^q_{i_q}$, and
\item For $b'_k \leq r \leq b'_{k+1}$ and for all $1 \le q \le s$ and
all $i_q$, we have $p'_k(r) \le p^q_{i_q}(r)$.
\end{enumerate}
\end{definition}

Note that using the above definition we can write the following two
equations:
\begin{equation}\label{eq:minuwv} minBase^{\ell+1}(u,v) = \min_{w\in
V}\left\{minBase^{\ell}(u,w,v)\right\}~.
\end{equation}

\begin{equation}\label{eq:minuwwv} minBase^{\ell}(u,w,v) =
min\left(minBase^{\ell}(u,w) + minBase^{\ell}(w,v)\right)~.
\end{equation}

The following claim will prove the result. The proof of the claim is
omitted due to lack of space.

\begin{claim}\label{cl:sizesminBase} If $B_1$ and $B_2$ are two
minBases of sizes $t_1$ and $t_2$ respectively, then
\begin{enumerate}
\item[(a)] $\min(B_1+B_2)$ can be computed from $B_1$ and $B_2$ in
time $O(t_1 + t_2)$.
\item[(b)] $\min\{B_1, B_2\}$ can be computed from $B_1$ and $B_2$ in
time $O(f(d)(t_1 + t_2))$, where $f(d)$ is the time required to
compute the intersection points of two degree-$d$ polynomials. The
size of $\min\{B_1, B_2\}$ is $O(d(t_1 + t_2))$.
\end{enumerate}
\end{claim}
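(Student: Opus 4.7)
The plan is to handle the two parts separately, with each reducing to a linear sweep through the merged break lists of $B_1$ and $B_2$.

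For part (a), I would exploit a key pointwise identity: for any real $r$,
\[
\min_{i,j}\bigl(p^1_i(r)+p^2_j(r)\bigr) \;=\; \min_i p^1_i(r) \;+\; \min_j p^2_j(r).
\]
Thus, if $r$ lies in the $i$-th interval of $B_1$ and the $j$-th interval of $B_2$, the minimum of the pairwise sums at $r$ is realized by the single pair $(i,j)$. So I would simply merge the two sorted sequences of breaks into one sequence with at most $t_1+t_2+2$ endpoints, and to each resulting common interval assign the sum of the two polynomials that are active on it in $B_1$ and $B_2$. A standard two-pointer merge runs in $O(t_1+t_2)$ time, and a final pass coalesces adjacent intervals carrying the same sum polynomial.

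For part (b), I would again begin by merging the break sequences of $B_1$ and $B_2$, producing $O(t_1+t_2)$ common intervals. On each such common interval exactly one polynomial $p^1_i$ and one polynomial $p^2_j$ is active, both of degree at most $d$. Computing $\min\{B_1,B_2\}$ restricted to this interval then reduces to the lower envelope of two degree-$d$ polynomials, which is determined by their at most $d$ real intersection points inside the interval, computable in $f(d)$ time by assumption. I would split the common interval at those intersection points and, on each of the at most $d+1$ resulting sub-intervals, select whichever of $p^1_i,p^2_j$ achieves the smaller value (checked at a single interior point of the sub-interval). Summed over all common intervals, this produces at most $(d+1)(t_1+t_2+1)=O(d(t_1+t_2))$ pieces and costs $O(f(d)(t_1+t_2))$ time in total. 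A concluding linear pass merges any syntactically adjacent pieces that happen to carry the same polynomial, so that the output is a genuine minBase.

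The only delicate point is the size bound in (b): one must be sure that the $d$ intersections in each common interval cannot cascade into super-linear blow-up over subsequent merges. This is precisely why the claim budgets a factor of $d$ per merge step, which is exactly what two degree-$d$ polynomials can produce on an interval; degenerate cases (a double intersection, identical polynomials, or intersections landing at the interval endpoints) only improve the count and are absorbed by the coalescing step. With these bounds in hand, Theorem~\ref{thm:gen} follows by unrolling equations \eqref{eq:minuwv} and \eqref{eq:minuwwv} through $\lceil\log n\rceil$ levels of doubling.
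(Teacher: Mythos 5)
The paper omits the proof of Claim~\ref{cl:sizesminBase} entirely (``due to lack of space''), so there is no in-paper argument to compare against; your proof is correct and is the natural one. For part~(a) the key observation is exactly the one you make: since each $B_\ell$ is a minBase, for any $r$ the active polynomial $p^\ell_{i^*(r)}$ is the pointwise minimum over all $p^\ell_i$, so $\min_{i,j}(p^1_i(r)+p^2_j(r))$ is realized by the pair of polynomials active at $r$, and a two-pointer merge of the break lists gives a minBase of size at most $t_1+t_2+1$ in $O(t_1+t_2)$ time. For part~(b) the same merge produces $O(t_1+t_2)$ cells on each of which exactly two degree-$d$ polynomials compete; their difference has at most $d$ real roots, so the lower envelope on a cell has at most $d+1$ pieces, giving size $O(d(t_1+t_2))$ and time $O(f(d)(t_1+t_2))$ (after finding the intersection points, one can determine the sign pattern across the $d+1$ sub-intervals with a single $O(d)$ evaluation of the difference polynomial and sign-flips at odd-multiplicity roots, so this step does not add an extra factor of $d$). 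One point worth making explicit, which you leave implicit, is that the resulting piecewise structure really is a minBase in the global sense required by the definition: for $r$ in cell $k$, $p'_k(r)=\min(p^1_{i^*(r)}(r),p^2_{j^*(r)}(r))$, and by the minBase property of $B_1$ and $B_2$ this is $\le \min(p^1_i(r),p^2_j(r))$ for \emph{every} $i,j$, hence $\le p'_{k'}(r)$ for every other output piece $p'_{k'}$. Your final remark about cascading blow-up concerns Theorem~\ref{thm:gen} rather than this claim and is not needed here.
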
 In order to compute $\min\{B_1,\ldots,B_s\}$ one
recursively computes $X=\min\{B_1,\ldots,B_{s/2}\}$ and
$Y=\min\{B_{s/2+1},\ldots,B_s\}$ and then takes $\min\{X,Y\}$.

If there are no negative cycles, then the advice that the
instantiation algorithm needs from the preprocessing algorithm
consists of $minBase^{\lceil\log n\rceil}(u,v)$. To deal with negative
cycles, both $minBase^{\lceil\log n\rceil}(u,v)$ and
$minBase^{\lceil\log n\rceil+1}(u,v)$ are produced, and the
instantiation algorithm compares them. if they are not equal, then the
correct output is $-\infty$.

Also note that $minBase^{0}(u,v)$ is the trivial minBase where the
breaks are $-\infty$ and $+\infty$ and the polynomial is weight
$W((u,v))$ associated to the edge $(u,v)$ if $(u,v) \in E$ and
$+\infty$ otherwise.

If the size of $minBase^{\ell}(u,v)$ is $s_{\ell}$, then by
(\ref{eq:minuwv}), (\ref{eq:minuwwv}), and by
Claim~\ref{cl:sizesminBase} the time to compute
$minBase^{\ell+1}(u,v)$ is $O(f(d))^{\log n}s_{\ell}$ and the size of
$minBase^{\ell+1}(u,v)$ is $O(d)^{\log n}s_{\ell}$. Thus one can
compute the advice for $u$ and $v$ in time
$$
(O(f(d))^{\log n})^{\log n} = O(n^{(O(1) + \log f(d))\log n})~,
$$
and the length of the advice string is $O(n^{(O(1) + \log d)\log n})$.

\section{Proof of Theorem \ref{thm:approx}}

Given the linear-weighted graph $G=(V,E,W)$, our preprocessing phase
begins by verifying that for all $r \in [\alpha,\beta]$, $G(r)$ has no
negative weight cycles. From the proof of Lemma \ref{lem:reweight} we
know that this holds if and only if both $G(\alpha)$ and $G(\beta)$
have no negative weight cycles. This, in turn, can be verified in
$O(mn)$ time using the Bellman-Ford algorithm.  We may now assume that
$G(r)$ has no negative cycles for any $r \in [\alpha,\beta]$.
Moreover, since our preprocessing algorithm will solve a large set of
shortest path problems, each of them on a specific instantiation of
$G$, we will first compute the reweighing functions
$g^{[\alpha,\beta]}_v$ of Lemma \ref{lem:reweight} which will enable
us to apply, in some cases, algorithms that assume nonnegative edge
weights.  Recall that by Lemma \ref{lem:reweight}, the functions
$g^{[\alpha,\beta]}_v$ for all $v \in V$ are computed in $O(mn)$ time.

The advice constructed by the preprocessing phase is composed of two
distinct parts, which we respectively call the {\em crude-short}
advice and the {\em refined-long} advice. We now describe each of
them.

For each edge $e \in E$, the weight is a linear function
$w_e=a_e+xb_e$.  Set $K=8(\beta-\alpha)\max_e|a_e|$. Let $N_0 = \lceil
K\sqrt{n}\ln n/\epsilon \rceil$ and let $N_1= \lceil Kn/\epsilon
\rceil$.  We define $N_0+1$ and $N_1+1$ points in
$[\alpha,\beta]$ and solve certain variants of shortest path problems
instantiated in these points.

Consider first the case of splitting $[\alpha,\beta]$ into $N_0$
intervals. Let $\rho_0 = (\beta-\alpha)/N_0$ and consider the points
$\alpha +i\rho_0$ for $i=0,\ldots,N_0$. The crude-short part of the
preprocessing algorithm solves $N_0+1$ {\em limited} all-pairs
shortest path problems in $G(\alpha +i\rho_0)$ for $i=0,\ldots,N_0$.
Set $t=4\sqrt{n}\ln n$, and let $d_i(u,v)$ denote the length of a
shortest path from $u$ to $v$ in $G(\alpha +i\rho_0)$ that is chosen
among all paths containing at most $t$ vertices (possibly
$d_i(u,v)=\infty$ if no such path exists).  Notice that $d_i(u,v)$ is
not necessarily the distance from $u$ to $v$ in $G(\alpha +i\rho_0)$,
since the latter may require more than $t$ vertices.  It is
straightforward to compute shortest paths limited to at most $k$
vertices (for any $1 \le k \le n$) in a real-weighted directed graph
with $n$ vertices in time $O(n^3 \log k)$ time, by the repeated
squaring technique. In fact, they can be computed in $O(n^3)$ time
(saving the $\log k$ factor) using the method from \cite{AHU-1974},
pp. 204--206.  This algorithm also constructs the predecessor data
structure that represents the actual paths.  It follows that for each
ordered pair of vertices $u,v$ and for each $i=0,\ldots,N_0$, we can
compute $d_i(u,v)$ and a path $p_i(u,v)$ yielding $d_i(u,v)$ in
$G(\alpha +i\rho_0)$ in $O(n^3|N_0|)$ time which is
$O(n^{3.5} \ln n)~.$
We also maintain, at no additional cost, linear functions $f_i(u,v)$
which sum the linear functions of the edges of $p_i(u,v)$. Note also
that if $d_i(u,v)=\infty$ then $p_i(u,v)$ and $f_i(u,v)$ are
undefined.

Consider next the case of splitting $[\alpha,\beta]$ into $N_1$
intervals. Let $\rho_1 = (\beta-\alpha)/N_1$ and consider the points
$\alpha +i\rho_1$ for $i=0,\ldots,N_1$. However, unlike the
crude-short part, the refined-long part of the preprocessing algorithm
cannot afford to solve an all-pairs shortest path algorithm for each
$G(\alpha +i\rho_1)$, as the overall running time will be too large.
Instead, we randomly select a set $H \subset V$ of (at most)
$\sqrt{n}$ vertices. $H$ is constructed by performing $\sqrt{n}$
independent trials, where in each trial, one vertex of $V$ is chosen
to $H$ uniformly at random (notice that since the same vertex can be
selected to $H$ more than once $|H| \le \sqrt{n}$).  For each $h \in
H$ and for each $i=0,\ldots,N_1$, we solve the single source shortest
path problem in $G(\alpha +i\rho_1)$ from $h$, and also (by reversing
the edges) solve the single-destination shortest path {\em toward}
$h$.  Notice that by using the reweighing functions
$g^{[\alpha,\beta]}_v$ we can solve all of these single source
problems using Dijkstra's algorithm.  So, for all $h \in H$ and
$i=0,\ldots,N_1$ the overall running time is
$$
O(|N_1||H|(m+n \log n)) = O(n^{1.5}m + n^{2.5}\log n) = O(n^{3.5})~.
$$
We therefore obtain, for each $h \in H$ and for each $i=0,\ldots,N_1$,
a shortest path tree $T_i(h)$, together with distances $d^*_i(h,v)$
from $h$ to each other vertex $v \in V$, which is the distance from
$h$ to $v$ in $G(\alpha +i\rho_1)$. We also maintain the functions
$f^*_i(h,v)$ that sum the linear equations on the path in $T^*_i(h)$
from $h$ to $v$.  Likewise, we obtain a ``reversed'' shortest path
tree $S^*_i(h)$, together with distances $d^*_i(v,h)$ from each $v \in
V$ to $h$, which is the distance from $v$ to $h$ in $G(\alpha
+i\rho_1)$. Similarly, we maintain the functions $f^*_i(v,h)$ that sum
the linear equations on the path in $S^*_i(h)$ from $v$ to $h$.

Finally, for each ordered pair of vertices $u,v$ and for each
$i=0,\ldots,N_1$ we compute a vertex $h_{u,v,i} \in H$ which attains
$
\min_{h \in H} d^*_i(u,h)+d^*_i(h,u)~.
$
Notice that the time to construct the $h_{u,v,i}$ for all ordered
pairs $u,v$ and for all $i=0,\ldots,N_1$ is $O(n^{3.5})$.  This
concludes the description of the preprocessing algorithm. Its overall
runtime is thus $O(n^{3.5} \ln n)$.

We now describe the instantiation phase. Given $u,v \in V$ and $r \in
[\alpha,\beta]$ we proceed as follows.  Let $i$ be the index for which
the number of the form $\alpha+i\rho_0$ is closest to $r$.  As we have
the advice $f_i(u,v)$, we let $w_0 = f_i(u,v)(r)$ (recall that
$f_i(u,v)$ is a function).  Likewise, let $j$ be the index for which
the number of the form $\alpha+j\rho_1$ is closest to $r$.  As we have
the advice $h=h_{u,v,j}$, we let $w_1 = f^*_j(u,h)(r)+f^*_j(h,u)(r)$.
Finally, our answer is $z=\min \{w_0,w_1\}$. Clearly, the
instantiation time is $O(1)$.  Notice that if we also wish to output a
path of weight $z$ in $G(r)$ we can easily do so by using either
$p_i(u,v)$, in the case where $z=w_0$ or using $S^*_j(h)$ and
$T^*_j(h)$ (we take the path from $u$ to $h$ in $S^*_j(h)$ and
concatenate it with the path from $h$ to $v$ in $T^*_j(h)$) in the
case where $z=w_1$.

It remains to show that, with very high probability, the result $z$
that we obtain from the instantiation phase is at most $\epsilon$
larger than the distance from $u$ to $v$ in $G(r)$.  For this purpose,
we first need to prove that the random set $H$ possesses some
``hitting set'' properties, with very high probability.


For every pair of vertices $u$ and $v$ and parameter $r$, let
$p_{u,v,r}$ be a shortest path in $G(r)$ among all simple paths from
$u$ to $v$ containing at least $t=4\sqrt{n}\ln n$ vertices (if $G$ is
strongly connected then such a path always exist, and otherwise we can
just put $+\infty$ for all $u,v$ pairs for which no such path exists).
The following simple lemma is used in an argument similar to one used
in \cite{Zw-2002}.
\begin{lemma}
\label{lem:prob} For fixed $u$, $v$ and $r$, with probability at least
$1-o(1/n^3)$ the path $p_{u,v,r}$ contains a vertex from $H$.
\end{lemma}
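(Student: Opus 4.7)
The plan is to carry out the standard hitting set calculation in the spirit of Ullman--Yannakakis and Zwick \cite{Zw-2002}. By hypothesis $p_{u,v,r}$ is a \emph{simple} path from $u$ to $v$ that uses at least $t=4\sqrt{n}\ln n$ distinct vertices of $V$. So among the $n$ vertices of $V$, at least $t$ of them belong to $V(p_{u,v,r})$, and in particular a uniformly random vertex of $V$ lies on $p_{u,v,r}$ with probability at least $t/n$.

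Next I would exploit independence of the $\sqrt{n}$ trials used to build $H$. Each trial independently picks a uniform vertex of $V$, and the event ``the chosen vertex misses $V(p_{u,v,r})$'' has probability at most $1-t/n$. Therefore
\[
\Pr\bigl[H \cap V(p_{u,v,r}) = \emptyset\bigr] \;\le\; \left(1-\frac{t}{n}\right)^{\sqrt{n}} \;\le\; \exp\!\left(-\frac{t}{\sqrt{n}}\right).
\]
Plugging in $t=4\sqrt{n}\ln n$ gives the bound $\exp(-4\ln n)=n^{-4}=o(1/n^{3})$, which is exactly the statement of the lemma.

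There is essentially no obstacle here; the only subtlety is making sure that $p_{u,v,r}$ is indeed a simple path, so that the count of distinct vertices on it is at least $t$ and the ratio $t/n$ is a valid lower bound for the success probability of a single trial. This is guaranteed by how $p_{u,v,r}$ was defined in the preceding paragraph (a shortest path \emph{among simple paths} of length at least $t$). Everything else is a one-line application of $(1-x)\le e^{-x}$ with the parameters $t$ and $|H|=\sqrt{n}$ that were specifically chosen to make the exponent exceed $3\ln n$.
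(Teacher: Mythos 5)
Your proposal is correct and matches the paper's proof essentially line for line: both bound the miss probability by $\left(1-\frac{4\sqrt{n}\ln n}{n}\right)^{\sqrt{n}} \le e^{-4\ln n} = n^{-4} = o(1/n^3)$, relying on the path having at least $t$ distinct vertices and the $\sqrt{n}$ trials being independent. You simply spell out the single-trial success probability a bit more explicitly than the paper does.
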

\begin{proof} Indeed, the path from $p_{u,v,r}$ by its definition has
at least $4\sqrt{n}\ln n$ vertices.  The probability that all of the
${\sqrt n}$ independent selections to $H$ failed to choose a vertex
from this path is therefore at most
$$
\left(1-\frac{4\sqrt{n} \ln n }{n} \right)^{\sqrt{n}} < e^{-4\ln n} <
\frac{1}{n^4} = o(1/n^3)~.
$$
\end{proof}

Let us return to the proof of Theorem \ref{thm:approx}.  Suppose that
the distance from $u$ to $v$ in $G(r)$ is $\delta$. We will prove that
with probability $1-o(1)$, $H$ is such that for every $u$, $v$ and $r$
we have $z \le \delta+\epsilon$ (clearly $z \ge \delta$ as it is the
precise length of some path in $G(r)$ from $u$ to $v$).
Assume first that there is a path $p$ of length
$\delta$ in $G(r)$ that uses less than $4\sqrt{n}\ln n$ edges.
Consider the length of $p$ in $G(\alpha+i\rho_0)$. When going from $r$
to $\alpha+i\rho_0$, each edge $e$ with weight $a_ex+b_e$ changed its
length by at most $|a_e|\rho_0$. By the definition of $K$, this is at
most $\rho_0 K/(8(\beta-\alpha))$.  Thus, $p$ changed its weight by at
most
$$
(4\sqrt{n} \ln n) \cdot \rho_0 \frac{K}{8(\beta-\alpha)} = (4\sqrt{n}
\ln n)\frac{K}{8N_0} < \frac{\epsilon}{2}.
$$
It follows that the length of $p$ in $G(\alpha+i\rho_0)$ is less than
$\delta+\epsilon/2$.  But $p_i(u,v)$ is a shortest path from $u$ to
$v$ in $G(\alpha+i\rho_0)$ of all the paths that contain at most $t$
vertices. In particular, $d_i(u,v) \le \delta+\epsilon/2$.  Consider
the length of $p_i(u,v)$ in $G(r)$. The same argument shows that the
length of $p_i(u,v)$ in $G(r)$ changed by at most $\epsilon/2$. But
$w_0=f_i(u,v)(r)$ is that weight, and hence $w_0 \le
\delta+\epsilon$. In particular, $z \le \delta+\epsilon$.

Assume next that every path of length $\delta$ in $G(r)$ uses at least
$4\sqrt{n}\ln n$ edges.  Let $p$ be one such path.  When going from
$r$ to $r'=\alpha+j\rho_1$, each edge $e$ with weight $a_ex+b_e$
changed its length by at most $|a_e|\rho_1$. By the definition of $K$,
this is at most $\rho_1 K/(8(\beta-\alpha))$.  Thus, $p$ changed its
weight by at most
$$
n \cdot \rho_1 \frac{K}{8(\beta-\alpha)} = n \frac{K}{8N_1} <
\frac{\epsilon}{8}.
$$
In particular, the length of $p_{u,v,r'}$ is not more than the length
of $p$ in $G(r')$, which, in turn, is at most $\delta+\epsilon/8$.  By
Lemma \ref{lem:prob}, with probability $1-o(1/n^3)$, some vertex of
$h$ appears on $p_{u,v,r'}$.  Moreover, by the union bound, with
probability $1-o(1)$ {\em all} paths of the type $p_{u,v,r'}$
(remember that $r'$ can hold one of $O(n)$ possible values) are thus
covered by the set $H$.  Let $h'$ be a vertex of $H$ appearing in
$p_{u,v,r'}$. We therefore have $d^*_j(u,h')+d^*_j(h',v) \le
\delta+\epsilon/8$. Since $h=h_{u,v,j}$ is taken as the vertex which
minimizes these sums, we have, in particular, $d^*_j(u,h)+d^*_j(h,v)
\le \delta+\epsilon/8$.  Consider the path $q$ in $G(\alpha+j\rho_1)$
realizing $d^*_j(u,h)+d^*_j(h,v)$.  The same argument shows that the
length of $q$ in $G(r)$ changed by at most $\epsilon/8$. But
$w_1=f^*_j(u,h)(r)+f^*_j(h,v)(r)$ is that weight, and hence $w_1 \le
\delta+\epsilon/4$. In particular, $z \le \delta+\epsilon/4$.

\section{Concluding remarks}\label{sec:conclusion} We have constructed
several parametric shortest path algorithms, whose common feature is
that they preprocess the generic instance and produce an advice that
enables particular instantiations to be solved faster than running the
standard weighted distance algorithm from scratch.  It would be of
interest to improve upon any of these algorithms, either in their
preprocessing time or in their instantiation time, or both.

Perhaps the most challenging open problem is to improve the
preprocessing time of Theorem \ref{thm:gen} to a polynomial one, or,
alternatively, prove an hardness result for this task.  Perhaps less
ambitious is the preprocessing time in Theorem \ref{thm:neg}.

Finally, parametric algorithms are of practical importance for other
combinatorial optimization problems as well. It would be interesting
to find applications where, indeed, a parametric algorithm can be
truly beneficial, as it is in the case of shortest path problems.

\section*{Acknowledgment} We thank Oren Weimann and Shay Mozes for
useful comments.


\begin{thebibliography}{99}

\bibitem{AHU-1974} A.~V. Aho, J.~E. Hopcroft, and  J.~Ullman, {\em The
Design and Analysis of Computer Algorithms}, Addison-Wesley Longman
Publishing Co., Boston, MA, 1974.

\bibitem{Be-1958} R.~Bellman, {\em On a routing problem}, Quarterly of
Applied Mathematics 16 (1958), 87--90.

\bibitem{Ca-1983} P.~Carstensen, {\em The complexity of some problems
in parametric linear and combinatorial programming}, Ph.D. Thesis,
Mathematics Dept., U. of Michigan, Ann Arbor, Mich., 1983.



\bibitem{Ch-2007} T.~M. Chan, {\em More Algorithms for All-Pairs
Shortest Paths in Weighted Graphs}, Proceedings of the $39^{th}$ ACM
Symposium on Theory of Computing (STOC), ACM Press (2007), 590--598.

\bibitem{CoWi-1990} D.~Coppersmith and S.~Winograd, {\em Matrix
multiplication via arithmetic progressions}, Journal of Symbolic
Computation 9 (1990), 251--280.

\bibitem{Di-1959} E.~W. Dijkstra, {\em A note on two problems in
connection with graphs}, Numerische Mathematik 1 (1959), 269--271.

\bibitem{Fl-1962} R.~W. Floyd, {\em Algorithm 97: shortest path}
Communications of the ACM 5 (1962), 345.

\bibitem{Fr-1976} M.~L. Fredman, {\em New bounds on the complexity of
the shortest path problem}, SIAM Journal on Computing 5 (1976),
49--60.


\bibitem{FrTa-1987} M.~L. Fredman and R.~E. Tarjan, {\em Fibonacci
heaps and their uses in improved network optimization algorithms},
Journal of the ACM 34 (1987), 596--615.

\bibitem{gus} D.~Gusfield {\em Parametric combinatorial computing and
a problem of program module distribution}, Journal of the ACM 30(3)
(1983), 551--563.

\bibitem{vHKRW89} C.~P.~M.~van Hoesel, A.~W.~J.~Kolen, A.~H.~G.~Rinooy and
A.~P.~M.~Wagelmans, {\em Sensitivity analysis in combinatorial optimization:
a bibliography}. Report 8944/A, Econometric Institute, Erasmus University
Rotterdam, (1989).



\bibitem{Jo-1977} D.~B. Johnson, {\em Efficient algorithms for
shortest paths in sparse graphs}, Journal of the ACM 24 (1977), 1--13.


\bibitem{M-79} N.~Megiddo, {\em Combinatorial Optimization with Rational
Objective Functions}, Mathematics of Operation Research Vol.4 No.4 (1979),
414--424.


\bibitem{Mur80} K.~Murty. {\em Computational complexity of parametric linear
programming}. Math. Programming-19, (1980) 213--219.



\bibitem{KaOr-1981} R.~M. Karp and J.~B. Orlin, {\em Parametric
shortest path algorithms for with an application to cycle staffing},
Discrete Applied Mathematics 3 (1981), 37--45.

\bibitem{Ni-Mi-2006} E.~Nikolova, J.~A. Kelner, M.~Brand and
M.~Mitzenmacher, {\em Stochastic Shortest Paths Via Quasi-convex
Maximization}, Proceedings of the $14^{th}$ Annual European Symposium
on Algorithms (ESA), LNCS (2006), 552--563.

\bibitem{Yo-Or-1991} N.~E. Young, R.~E. Tarjan and J.~B. Orlin, {\em
Faster parametric shortest path and minimum-balance algorithms},
Networks 21 (1991), 205--221.


\bibitem{Zw-2002} U.~Zwick, {\em All-pairs shortest paths using
bridging sets and rectangular matrix multiplication}, Journal of the
ACM 49 (2002), 289--317.

\end{thebibliography}
\end{document}